\def\bx{{\mathbf{x}}}
\def\by{{\mathbf{y}}}
\def\Z{{\mathbb Z}}
\def\N{{\mathbb N}}
\def\F{{\mathbb F}}
\def\calY{{\mathcal Y}}
\def\calX{{\mathcal X}}
\newcommand\ev{{\operatorname{ev}}}
\newtheorem{theorem}{Theorem}[section]
\newtheorem{lemma}[theorem]{Lemma}
\newtheorem{corollary}[theorem]{Corollary}
\theoremstyle{definition}
\newtheorem{example}[theorem]{Example}
\theoremstyle{remark}
\newtheorem{notation}[theorem]{Notation}
\newtheorem{fact}[theorem]{Fact}
\numberwithin{equation}{section}
\begin{document}



\title{Some Patterns of Duplications in the outputs of Mersenne Twister
Pseudorandom Number Generator MT19937}
\author[label1]{Alain Schumacher}

\affiliation[label1]{organization={SICAP R\&D},
addressline={L-1272 Luxembourg-Beggen 68, rue de Bourgogne},
city={Luxembourg},
postcode={},
country={Luxembourg}}

\author[label2]{Takuji Nishimura}
\affiliation[label2]{organization={Yamagata Ujniversity},
addressline={1-4-12 Kojirakawa},
city={Yamagata},
postcode={990-8560},
country={Japan}}
\author[label3]{Makoto Matusmoto}
\affiliation[label3]{organization={AMAGAERU Institute of Free Mathematics},
addressline={2-37-6, Narita-Higashi, Suginami-ku},
city={Tokyo},
postcode={166-0015},
country={Japan}}

\begin{abstract}
The Mersenne Twister MT19937 pseudorandom number generator,
introduced by the last two authors in 1998, is still widely used.
It passes all existing statistical tests,
except for the linear complexity test, which measures
the ratio of the even-odd of the number of 1's among
specific bits (and hence should not be important for
most applications). Harase reported that MT19937 is rejected
by some birthday-spacing tests, which are rather artificially
designed.

In this paper, we report that MT19937 fails in a natural
test based on the distribution of run-lengths on which
we found an identical value in the output 32-bit integers.
The number of observations of the run-length 623 is some 40 times larger
than the expectation (and than the numbers of the observations
of 622 and 624, etc.), which implies that the corresponding p-value is
almost 0.

We mathematically analyze the phenomena, and
obtain a theorem which explains these failures.
It seems not to be a serious defect of MT19937,
because finding the defect requires astronomical efforts.
Still, the phenomena should be reported to the academic society
relating to pseudorandom number generation.
\end{abstract}
\begin{keyword}
Pseudorandom number generation\sep statistical tests\sep Mersenne Twister

\MSC[2010]
65C10, 11K45

\end{keyword}
\maketitle
\section{Introduction}
A Mersenne Twister MT19937 pseudo random number generator \cite{MT}
is still widely used.
As far as the authors know, no reasonable statistical
tests reject MT19937, except for the
linear complexity test.
The test measures whether the number of 1's
in the bit-presentation of some of the output integers
is even or not, and hence it matters seldomly.
The first failure of MT19937 in other statistical tests
than the linear complexity test is reported
by Harase \cite{HARASE-MTLATTICE}.
It is found that MT19937 has linear relations among
three non-successive outputs and is
rejected by a birthday-spacing test on them.
However, the test is rather artificial,
concentrating on outputs of fixed non-successive
outputs, such as
$y_i, y_{i+396}$ and
$y_{i+623}$ $(i=0,1,\ldots)$.

In this paper, we report that a
modified version of the repetition test
introduced by
Gil, Gonnet, and Petersen \cite{REPETITION}
naturally detects
some flaws of MT19937.
The flaws are shown in Section~\ref{sec:FLAW},
which are on positive correlations between
repetitions of identical 32-bit integers.
In Section~\ref{sec:proof},
we prove that these flaws are
due to the sparseness of
a matrix $C$ appeared in the
recursion of MT19937
(this is not the case for MT19937-64 \cite{MT64}).
We
explain how the $C$ yields
some clear patterns on the repetitions.
This section is mathematically rather complicated,
and one may skip it
and go to Section~\ref{sec:REPETITION},
where we explain how these flaws are
found through experiments on
the repetition tests, i.e., by observing the
distribution of the run-lengths to find a repetition.

\section{Flaws of Mersenne Twister}
\label{sec:FLAW}
Let $\N$ denote the set of non-negative integers.
We report flaws of Mersenne Twister MT19937.
\begin{fact} Put $n=624$ and $m=397$.
Let $\by_0, \by_1, \by_2, \ldots$ be the 32-bit integer
outputs of MT19937.
Let $i$ be an arbitrary integer.
Then, the following are strongly positively correlated.
\begin{enumerate}[(1)]
\item $\by_{i+(m-1)}=\by_{i+(n-1)}$.
\item $\by_{i+2(m-1)}=\by_{i+2(n-1)}$.
\item $\by_{i+4(m-1)}=\by_{i+4(n-1)}$.
\item $\by_{i+8(m-1)}=\by_{i+8(n-1)}$.
\item $\by_{i+16(m-1)}=\by_{i+16(n-1)}$.
\item $\by_{i+32(m-1)}=\by_{i+32(n-1)}$.
\end{enumerate}
To be more precise,
under the condition ($i$) above ($i=1,2,3,4,5$),
($i+1$) occurs with the probability \emph{exactly}
$1/2, 1/4, 1/16, 1/256, 1/65536$, respectively.
These numbers are too large, since if the outputs
were truly random, each probability
should be $2^{-32}$.
One observes that, for example, a triplet consisting from
three of the six conditions appears too often.
E.g.,
the probability that (2), (3) and (4) occur at the same time is
$1/64\cdot 2^{-32}$, which is far too large than the expectation
$2^{-3\times 32}$ for truely random sequences.
\end{fact}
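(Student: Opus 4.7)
The plan is to reformulate each equality $\by_a=\by_b$ as a linear condition on the pre-tempering state words of MT19937 (the tempering is bijective, so $\by_a=\by_b\iff x_a=x_b$), and then to compute the ranks of these linear conditions jointly. I would write the MT recurrence in matrix form as
\[
x_{K+n}\;=\;x_{K+m}+x_K B+x_{K+1} C,
\]
where $B:=UA$ and $C:=LA$, with $U$ the projection onto the most significant bit, $L$ the projection onto the remaining $31$ bits, and $A$ the $32\times 32$ twist matrix satisfying $xA=\text{(right-shift of }x)+a\cdot x[0]$ for the MT19937 constant $a$.

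First I would iterate. Setting $\phi^{(k)}(K):=x_{K+2^k n}-x_{K+2^k m}$, the identity $\phi^{(k+1)}(K)=\phi^{(k)}(K+2^k n)+\phi^{(k)}(K+2^k m)$ (in characteristic $2$) unfolds by induction into
\[
\phi^{(k)}(K)\;=\;\sum_{s=0}^{2^k}x_{K+s}\,\alpha_s^{(k)},
\]
where $\alpha_s^{(k)}$ is the $\F_2$-sum of all length-$2^k$ words in $\{B,C\}$ with exactly $s$ occurrences of $C$. Taking $K=i-2^k$ realizes condition $(k+1)$ of the Fact as $\phi^{(k)}(i-2^k)=0$.

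The heart of the argument is a sparseness identity. The matrix $B$ has rank one, with image $\F_2\cdot e_{30}$ (where $e_j$ denotes the standard basis vector with a single $1$ in bit $j$); and $e_{30}C^j=e_{30-j}$ for $0\le j\le 30$. Each of these vectors has top bit $0$, and $B$ annihilates any vector of zero top bit, so $BC^jB=0$ for $0\le j\le 30$. For $k\le 5$ one has $2^k\le 32<33$, so any word of length $2^k$ with two letters $B$ must contain a substring $BC^jB$ with $j\le 30$, and hence vanishes. The expansion therefore collapses to the at-most-one-$B$ contributions,
\[
\phi^{(k)}(i-2^k)\;=\;x_{i-1}\sum_{j=0}^{2^k-1}C^j B\,C^{2^k-1-j}+x_i\,C^{2^k},
\]
involving only the two state words $x_{i-1}$ and $x_i$.

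Finally, I would track conditional ranks. Condition $\delta_0:=\phi^{(0)}(i-1)=0$ is equivalent to $x_{i-1}[31]=0$ and $x_iL=0$; in particular $x_iC=0$, which kills the $x_iC^{2^k}$ term, and the $j=0$ summand $x_{i-1}B\,C^{2^k-1}$ vanishes as well. Using $x_{i-1}C^jB=(x_{i-1}C^j)[31]\cdot e_{30}$ and $e_{30}C^t=e_{30-t}$ for $t\le 30$, the condition $\delta_k:=\phi^{(k)}(i-2^k)=0$ reduces to $(x_{i-1}C^j)[31]=0$ for $j=1,\ldots,2^k-1$, a system whose coefficient vectors are distinct standard basis vectors and thus linearly independent. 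The identity $(x_{i-1}C^j)[31]=(x_{i-1}C^{j-1})[0]$, combined with a straightforward induction from the shift-plus-XOR structure of $C$, shows that $x_{i-1}[j-1]$ is the highest-index bit of $x_{i-1}$ appearing in $(x_{i-1}C^{j-1})[0]$, so the family $\{(x_{i-1}C^{j-1})[0]\}_{j=1}^{31}$ is upper-triangular in the bits of $x_{i-1}$ and therefore linearly independent. Incrementing $k$ by one adds exactly the $2^{k-1}$ new indices $j\in\{2^{k-1},\ldots,2^k-1\}$, producing $2^{k-1}$ fresh independent bit-constraints and the stated conditional probabilities $2^{-2^{k-1}}$ for $k=1,\ldots,5$. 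The main obstacle is the sparseness identity $BC^jB=0$ for $0\le j\le 30$ together with its combinatorial consequence that no word of length $\le 32$ with two or more $B$'s survives; once that is in hand, the rank computation is essentially bookkeeping.
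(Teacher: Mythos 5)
Your algebraic setup is, up to swapping the names of $B$ and $C$, exactly the one the paper uses: the recursion written as a linear operator in the two matrices $UA$ and $LA$, the characteristic-two squaring identity $\phi^{(k+1)}(K)=\phi^{(k)}(K+2^kn)+\phi^{(k)}(K+2^km)$ (the paper's Frobenius lemma for $(D^{n-1}-D^{m-1}+B+D^{-1}C)^{2^s}$), and the key sparseness fact that any length-$\le w$ word containing two copies of the rank-one factor vanishes (the paper's $CB^sC=0$ for $s\le w-2$). Your resulting expansion $\phi^{(k)}(i-2^k)=x_{i-1}\sum_{j}C^jBC^{2^k-1-j}+x_iC^{2^k}$ is precisely the paper's equation $\bx_iB^{2^k}+\bx_{i-1}Q_{2^k}=0$, and your triangularity argument for the independence of the functionals $x\mapsto(xC^{j})[0]$ is a correct, more explicit substitute for part of the paper's rank bookkeeping.

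The gap is in the final step. You use $\delta_0=0$ (i.e.\ condition (1) of the Fact) to kill the $x_iC^{2^k}$ term and the $j=0$ summand, so every probability you compute is conditioned on (1): what you actually establish is $P(\delta_k=0\mid\delta_0=\cdots=\delta_{k-1}=0)=2^{-2^{k-1}}$. But the Fact asserts the pairwise conditionals $P\bigl((k{+}1)\mid(k)\bigr)=2^{-2^{k-1}}$ for $k\ge1$, and joint probabilities such as $P\bigl((2)\wedge(3)\wedge(4)\bigr)=2^{-32}/64$, in which condition (1) is \emph{not} assumed; since $P\bigl((1)\mid(2)\bigr)=1/2$, the events $\{(2)\}$ and $\{(1)\wedge(2)\}$ are genuinely different, so for $k\ge2$ your computation does not prove the stated numbers even though they happen to coincide. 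Without $\delta_0=0$ each condition $\delta_k=0$ is a full system in both $x_{i-1}$ and $x_i$, namely $\sum_{j=0}^{2^k-1}(x_{i-1}C^j)[31]\,e_{31-2^k+j}+x_iC^{2^k}=0$, and one must show that the stacked system for consecutive values of $k$ has rank exactly $w+(2^t-2^s)$. That requires the two ingredients where the paper spends its effort: the recursion $Q_{2^k}=Q_{2^{k-1}}C^{2^{k-1}}+C^{2^{k-1}}Q_{2^{k-1}}$ (in your naming), which permits column elimination between the blocks, and the observation that $B+C=A$ is an invertible companion matrix, so that each single block built from $C^{2^k}$ stacked over $Q_{2^k}$ already has rank $w$. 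With those two facts added, your bookkeeping goes through unconditionally; as written, the proposal proves the Fact only for the pair (1),(2) and for chains of conditions that start at (1).
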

See Example~\ref{ex:prob} below, and for
the general formula Theorem~\ref{th:main}.
These flaws are discovered by the first author
experimentally, when
a large variant of the repetition test
introduced by Gil, Gonnet and Petersen \cite{REPETITION} is executed,
see Section~\ref{sec:REPETITION}.

\section{Theorem and Proof}
\label{sec:proof}
\begin{notation}
Let $\F_2$ denote the two element field $\{0,1\}$
with addition is specified by the exor $1+1=0$.
We identify the set of 32-bit integers with
the set of horizontal vectors
$W=\F_2^{32}$ (W for words). Let $w$ be $32$.
Consider the set of sequences of $W$ indexed with $\Z$,
namely,
$$
W^\Z:=\{(\ldots, \by_2,\by_1,\by_0,\by_{-1},\by_{-2},\ldots)
\mid \by_i\in W\}.
$$
Let $\calY=(\by_i)_{i\in \Z}$ be a sequence in $W^\Z$.
We define a delay operator
$$
D:W^\Z \to W^\Z
$$
by mapping
$$(\by_i \mid i\in \Z) \mapsto (\by_{i+1} \mid i\in \Z).$$
We denote this action from the right:
$$
\calY \mapsto \calY D.
$$
This means that $D$ shifts the components by one to the right.
Cleary, $D^{-1}$
exists, which shifts the components in the other direction.
A $(w\times w)$ matrix $M$ acts on $W^Z$ diagonally from right,
i.e.,
$$
\calY M = (\by_i M \mid i\in \Z).
$$
This action
is denoted by the same letter, thus
$$
M: W^\Z \to W^\Z.
$$
It is easy to check that $D$ and $M$ commute.
We define evaluation at $j\in \Z$ by
$$
\ev_j:W^\Z \to W, \quad \calY \to \by_j.
$$
We have
$$
\ev_j(D^k(\calY))=\by_{j+k}.
$$
\end{notation}
Since the output of MT19937 is purely periodic,
we may consider them as a sequence indexed by $\Z$.
Also, since a tempering is a (linear) bijection,
which preserves the identity relation,
we may consider
the sequences generated by the recursion \cite[(2.1)]{MT}
(i.e., before tempering)
as the output sequence of MT19937, as far as
we consider only the repetition of outputs.
The next lemma gives an equivalent condition to the
recursion formula in \cite[Section~2.1]{MT}.
\begin{lemma}
The sequences which MT19937 produces (before tempering) are
characterized as a kernel of the operator
\begin{equation}\label{eq:operator}
D^{n}-D^{m}+DB+C: W^\Z \to W^\Z,
\end{equation}
where
\begin{eqnarray*}
B&=&
\begin{pmatrix}
0 & 0 & 0 & 0 & \cdots & 0 \\
0 & 0 & 1 & 0 & \cdots & 0 \\
0 & 0 & 0 & 1 & \cdots & 0 \\
0 & 0 & 0 & 0 & \ddots & 0 \\
\vdots & \vdots & \vdots & \vdots & \vdots & \vdots \\
0 & 0 & 0 & 0 & \cdots & 1 \\
a_{w-1} & a_{w-2} & a_{w-3} & a_{w-4}& \ddots & a_0
\end{pmatrix} \\
&=&
\left(\begin{array}{c|c}
0 & 0 \cdots 0 \\ \hline
\begin{matrix}
0 \\
\vdots \\
0 \\
\\
\end{matrix} & \text{\huge{F}}
\end{array}\right),
\end{eqnarray*}
where $F$ is a companion matrix,
and
\begin{equation}
C=
\begin{pmatrix}
0 & 1 & 0 & 0 & \cdots & 0 \\
0 & 0 & 0 & 0 & \cdots & 0 \\
\vdots & \vdots & \vdots & \vdots & \vdots & \vdots \\
0 & 0 & 0 & 0 & \cdots & 0 \\
\end{pmatrix}.
\end{equation}
\end{lemma}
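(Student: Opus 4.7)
The plan is to translate the MT19937 recurrence from \cite[Section~2.1]{MT}, namely
$$\by_{i+n} = \by_{i+m} + (\by_i^u \mid \by_{i+1}^l)\, A,$$
into a linear identity on $\calY = (\by_i)_{i\in\Z}$ using the delay operator $D$. Here $(\by_i^u \mid \by_{i+1}^l)$ is the concatenation of the single top bit of $\by_i$ with the lower $r=w-1=31$ bits of $\by_{i+1}$, and $A$ is the companion-type matrix whose first $w-1$ rows realise a unit right-shift and whose last row is $(a_{w-1}, a_{w-2}, \ldots, a_0)$. The whole task is to rewrite the bit cut-and-paste as a right-action of suitable matrices and then to recognise $B$ and $C$ in the result.

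I would first introduce two $w\times w$ diagonal masks: $U$ with a single $1$ in position $(1,1)$ and zeros elsewhere, and $L$ with a $0$ in $(1,1)$ and ones on the remaining diagonal entries. Because $\by$ is a horizontal vector, $\by U$ extracts the top bit (zero-padded) and $\by L$ extracts the lower $31$ bits, so
$$(\by_i^u \mid \by_{i+1}^l) = \by_i U + \by_{i+1} L.$$
Multiplying by $A$ on the right turns the recurrence into
$$\by_{i+n} = \by_{i+m} + \by_i (UA) + \by_{i+1} (LA).$$

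The heart of the proof is then two direct matrix identifications. The product $UA$ keeps only the first row of $A$, which equals $(0,1,0,\ldots,0)$, and zeros out all other rows; this is exactly the matrix $C$. The product $LA$ deletes only the first row of $A$, leaving the sub-shift in rows $2,\ldots,w-1$ and the companion coefficients $(a_{w-1},\ldots,a_0)$ in row $w$; this is exactly the matrix $B$ as displayed. Substituting into the rewritten recurrence and using $\ev_i(\calY D^k)=\by_{i+k}$ together with the commutativity of $D$ with matrix action gives
$$\ev_i\bigl(\calY(D^n-D^m+DB+C)\bigr) = \by_{i+n}-\by_{i+m}+\by_{i+1}B+\by_i C = 0$$
for every $i$. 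Hence $\calY$ lies in the kernel of \eqref{eq:operator} if and only if the MT19937 recurrence holds at every index, which is the claimed characterisation; both directions of the equivalence are the same calculation.

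The main obstacle is nothing conceptual but careful bookkeeping: one must fix all conventions (horizontal vectors, right-action, the specific case $r=w-1=31$) so that the masks $U,L$ compose with $A$ in the correct order to produce precisely the sparse matrices $C$ and $B$ displayed in the statement, and one should note in passing that over $\F_2$ the minus signs in $D^n-D^m$ are harmless, so the operator \eqref{eq:operator} really encodes the MT19937 recursion as written.
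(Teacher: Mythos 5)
Your proof is correct and follows essentially the same route as the paper: evaluate the operator at index $k$ via $\ev_k$ and match the resulting identity $\bx_{k+n}-\bx_{k+m}-\bx_{k+1}B-\bx_k C=0$ against the MT19937 recursion. You additionally make explicit, via the diagonal masks $U$ and $L$, why $UA=C$ and $LA=B$ in the case $w-r=1$ — a detail the paper leaves to the reader but which is verified correctly here.
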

\begin{proof}
Let $\calX=(\bx_i)_{i\in \Z}$ be a sequence in $W^\Z$.
We compute
\begin{eqnarray*}
\ev_k((\calX)(D^{n}-D^{m}-DB-C))
&=& \ev_k((\calX)D^{n}-\calX D^{m}-\calX DB-\calX C)\\
&=& \bx_{k+n}-\bx_{k+m}-\bx_{k+1}B-\bx_k C.\\
\end{eqnarray*}
This is zero for every $k$, if and only if
the recursion \cite[(2.1)]{MT} is satisfied.
\end{proof}
Note that $C$ has only one non-zero component. This sparseness
of $C$ comes from the smallness of the remainder:
$19937 \div 32 = 623$ with remainder $1$ (the case
$w-r=1$ in \cite[Section~3.1]{MT}).
For example, such a phenomenon is not observed
for MT19937-64 \cite{MT64}.

Since $D$ is invertible, we have the following
\begin{corollary}
The space of the outputs (before tempering) of
MT19937 is the kernel of the operator
\begin{equation}\label{eq:op}
D^{n-1}-D^{m-1}+B+D^{-1}C: W^\Z \to W^Z.
\end{equation}
\end{corollary}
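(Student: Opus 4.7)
The plan is essentially a one-line reduction to the preceding lemma, so I would frame it as follows.

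First I would recall from the lemma that the space of MT19937 outputs (before tempering) is exactly the kernel of the operator $L := D^{n}-D^{m}+DB+C$ acting on $W^{\Z}$. Since the delay operator $D$ is bijective on $W^{\Z}$ (with inverse $D^{-1}$ shifting in the opposite direction), multiplying $L$ on either side by any power of $D$ yields an operator with the same kernel: indeed, for any invertible $T$ on $W^{\Z}$, $\ker(TL) = \ker(L) = \ker(LT)$.

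Next I would apply this with $T = D^{-1}$. Using the fact established in the notation section that $D$ commutes with every constant $(w\times w)$ matrix acting diagonally (in particular with $B$ and $C$), I would compute
\begin{equation*}
D^{-1}L \;=\; D^{-1}\bigl(D^{n}-D^{m}+DB+C\bigr) \;=\; D^{n-1}-D^{m-1}+B+D^{-1}C,
\end{equation*}
which is precisely the operator in \eqref{eq:op}. Hence its kernel equals $\ker(L)$, which is the space of MT19937 outputs.

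There is no real obstacle here; the only thing to be careful about is to justify, once, that multiplying an operator on $W^{\Z}$ by an invertible operator preserves its kernel, and that $D^{-1}$ commutes with the matrix operators $B$ and $C$ as noted in the notation paragraph. Everything else is a direct algebraic manipulation inherited from the lemma.
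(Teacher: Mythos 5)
Your proposal is correct and matches the paper's (essentially unwritten) argument: the paper simply remarks ``Since $D$ is invertible'' and states the corollary, which is exactly your factoring of the lemma's operator as $D$ times the operator in \eqref{eq:op}, using that $D$ commutes with $B$ and $C$ and that composing with an invertible operator preserves the kernel. No issues.
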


\begin{lemma}
As an operator on $W^\Z$,
for any non-negative integer $s$,
$$
(D^{n-1}-D^{m-1}+B+D^{-1}C)^{2^s}
=
D^{2^s(n-1)}-D^{2^s(m-1)}+(B+D^{-1}C)^{2^s}.
$$
\end{lemma}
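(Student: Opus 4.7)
The plan is to exploit the Frobenius identity in characteristic $2$. Write the operator as a sum of three (not four) summands:
\[
a = D^{n-1}, \qquad b = -D^{m-1} = D^{m-1}, \qquad c = B + D^{-1}C,
\]
so that $(D^{n-1}-D^{m-1}+B+D^{-1}C)^{2^s} = (a+b+c)^{2^s}$. Keeping $B+D^{-1}C$ grouped as a single summand is essential, since $B$ and $C$ (and hence $B$ and $D^{-1}C$) do not commute in general, so we must not expand within $c$.

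I would first verify that $a$, $b$, and $c$ commute pairwise in the ring of $\F_2$-linear endomorphisms of $W^{\Z}$. The two powers of $D$ commute trivially, and the Notation section already records that $D$ commutes with the diagonal action of any matrix on $W^{\Z}$; consequently $D^{\pm k}$ commutes with $B$, with $C$, and therefore with $c = B + D^{-1}C$. So $\{a,b,c\}$ lie in a commutative subring.

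I would then invoke the characteristic-$2$ Frobenius: for any pairwise commuting elements $x,y$ of an $\F_2$-algebra one has $(x+y)^2=x^2+y^2$, and hence for three pairwise commuting elements $(a+b+c)^2 = a^2+b^2+c^2$, the cross terms $2(ab+bc+ca)$ vanishing because $2=0$. A one-line induction on $s$ gives $(a+b+c)^{2^s}=a^{2^s}+b^{2^s}+c^{2^s}$. Substituting $a^{2^s}=D^{2^s(n-1)}$, $b^{2^s}=D^{2^s(m-1)}$, and $c^{2^s}=(B+D^{-1}C)^{2^s}$ yields the stated identity (the signs are cosmetic, since $-1=1$ in $\F_2$). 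There is no real obstacle here; the only thing one must resist is the temptation to expand inside $(B+D^{-1}C)^{2^s}$, which would fail because the two summands of $c$ do not commute and will be analyzed separately in a later step of the paper.
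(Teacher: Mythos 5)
Your proposal is correct and is essentially the paper's own argument: both rest on the facts that $D$ commutes with the diagonal matrix actions and that cross terms vanish in characteristic $2$, giving the square case and then inducting on $s$. The only cosmetic difference is that you split into three commuting summands where the paper groups $D^{n-1}-D^{m-1}$ as a single term, and your explicit warning that $B$ and $D^{-1}C$ must stay grouped (since they do not commute) is a correct and worthwhile remark.
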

\begin{proof}
In the case $s=1$, paying attention to $1+1=0$ and
that $D$ commutes with any other operators,
we have
\begin{eqnarray*}
& &
(D^{n-1}-D^{m-1}+B+D^{-1}C)^2 \\
&=&
(D^{n-1}-D^{m-1})^2+2((D^{n-1}-D^{m-1}))(B+D^{-1}C)+(B+D^{-1}C)^2 \\
&=&
D^{2(n-1)}-D^{2(m-1)}+(B+D^{-1}C)^2.
\end{eqnarray*}
The cases for general $s$ follow by induction.
\end{proof}
\begin{corollary}\label{cor:repetition}
Let $\calX=(\bx_i \mid i \in \Z)$ be a sequence generated by MT19937.
Let $s$ be a non-negative integer.
Then,
$$
\bx_{i+2^s(m-1)}=\bx_{i+2^s(n-1)}
$$
holds if and only if
$$
\ev_i(\calX(B+D^{-1}C)^{2^s})=0.
$$
\end{corollary}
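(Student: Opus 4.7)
The plan is to chain together the preceding corollary with the lemma that was just proved. Because $\calX$ is a sequence generated by MT19937, the corollary applied to~\eqref{eq:op} tells us that
$$
\calX(D^{n-1}-D^{m-1}+B+D^{-1}C)=0
$$
in $W^\Z$. Since an operator that annihilates $\calX$ also has every positive power annihilating $\calX$, I immediately obtain
$$
\calX(D^{n-1}-D^{m-1}+B+D^{-1}C)^{2^s}=0.
$$

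The next step is to invoke the preceding lemma to rewrite the $2^s$-th power. Substituting the identity
$$
(D^{n-1}-D^{m-1}+B+D^{-1}C)^{2^s}=D^{2^s(n-1)}-D^{2^s(m-1)}+(B+D^{-1}C)^{2^s}
$$
into the displayed equation gives
$$
\calX D^{2^s(n-1)}-\calX D^{2^s(m-1)}+\calX(B+D^{-1}C)^{2^s}=0.
$$

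Now I apply $\ev_i$ to both sides. Using $\ev_i(\calX D^k)=\bx_{i+k}$ from the notation block, the evaluation reads
$$
\bx_{i+2^s(n-1)}-\bx_{i+2^s(m-1)}+\ev_i\!\bigl(\calX(B+D^{-1}C)^{2^s}\bigr)=0.
$$
Since everything takes place over $\F_2$, subtraction is the same as addition, so the first two terms cancel exactly when $\bx_{i+2^s(m-1)}=\bx_{i+2^s(n-1)}$. The remaining term vanishes if and only if that repetition holds, which is the stated equivalence.

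No step here is really an obstacle: once the lemma is in hand, the argument is purely a matter of substituting, evaluating at coordinate $i$, and noting that $-1=1$ in $\F_2$. The only thing worth being careful about is that the statement is an ``if and only if,'' which is automatic because the difference between the two sides of the claimed equality is precisely $\ev_i(\calX(B+D^{-1}C)^{2^s})$, a single vector in $W$ whose vanishing is equivalent to the repetition condition.
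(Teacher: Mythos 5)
Your proof is correct and follows exactly the same route as the paper: annihilation by the operator, the power identity from the preceding lemma, and evaluation at $i$ over $\F_2$. You are slightly more explicit about why the equivalence is a genuine ``if and only if,'' but the argument is the same.
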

\begin{proof}
Since $\calX$ is generated by MT19937,
$$
\calX(D^{n-1}-D^{m-1}+B+D^{-1}C)^{2^s}=0.
$$
It follows that
$$
\calX(D^{2^s(n-1)}-D^{2^s(m-1)}+(B+D^{-1}C)^{2^s})=0.
$$
By $\ev_i$, this yields the desired statement.
\end{proof}
We compute the above for some small $s$.
It is easy to check that $C^2=0$,
and
$$
CB=
\begin{pmatrix}
0 & 0 & 1 & 0 & \cdots & 0 \\
0 & 0 & 0 & 0 & \cdots & 0 \\
\vdots & \vdots & \vdots & \vdots & \vdots & \vdots \\
0 & 0 & 0 & 0 & \cdots & 0
\end{pmatrix},
$$
and
$$
BC
=
\begin{pmatrix}
0 & 0 & 0 & 0 & \cdots & 0 \\
0 & 0 & 0 & 0 & \cdots & 0 \\
\vdots & \vdots & \vdots & \vdots & \vdots & \vdots \\
0 & 0 & 0 & 0 & \cdots & 0 \\
0 & 1 & 0 & 0 & \cdots & 0
\end{pmatrix},
$$
both having only one non-zero component.
\begin{lemma}\label{lem:B-power}
For $s<w$, we have
$$
B^s=
\left(
\begin{array}{c|c}
0 & 0 \cdots 0\\ \hline
\begin{matrix}
0 \\
\vdots \\
0 \\
\\
\\
\vdots \\
*
\end{matrix} & \text{\huge F}^{\ s}
\end{array}
\right)
$$
where in the first column, $1$ is at the $s$-th row
from the bottom (i.e. the $(w-s+1)$-st row from the top).
We have
$$
CB^sC=0
$$
for $s\leq w-2$.
\end{lemma}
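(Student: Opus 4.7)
The plan is to derive both statements from the explicit action of $B$ on the standard basis of row vectors $e_1, \ldots, e_w \in W = \F_2^w$. Reading off the rows of $B$ one has $e_1 B = 0$, $e_i B = e_{i+1}$ for $2 \leq i \leq w - 1$, and $e_w B = (a_{w-1}, a_{w-2}, \ldots, a_0)$, so $B$ acts as a shift on the ``middle'' basis vectors, annihilates $e_1$, and wraps $e_w$ by the companion recursion.

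For the shape of $B^s$ I would induct on $s$, reading row $i$ of $B^s$ as $e_i B^s$. Row $1$ is always zero since $e_1$ is killed in one step. For $2 \leq i \leq w - s$, iterated shifts give $e_i B^s = e_{i+s}$, whose single nonzero entry lies in column $i + s \geq 3$, so these rows contribute zeros to the first column. For $i = w - s + 1$, the shifts reach $e_w$ after $s - 1$ applications of $B$, and one further application yields $e_w B = (a_{w-1}, \ldots, a_0)$; this places $a_{w-1}$ in the first column at the row $w - s + 1$ stated in the lemma, equal to $1$ under the MT19937 parameter $a$. The remaining rows $w - s + 2, \ldots, w$ correspond to further iterates and contribute the unspecified entries denoted $*$. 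The identification of the lower-right $(w-1) \times (w-1)$ block of $B^s$ with $F^s$ follows because $\operatorname{span}(e_2, \ldots, e_w)$ is $B$-invariant and $B$ acts on it as $F$ in the natural basis, so powers commute with this decomposition.

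For $CB^s C = 0$ when $s \leq w - 2$, the key observation is that $C$ has a single nonzero entry at position $(1, 2)$, so acting on row vectors from the right it sends $e_1 \mapsto e_2$ and $e_i \mapsto 0$ for $i \geq 2$. Hence $CB^s$ has only its first row nonzero, equal to $e_1 C B^s = e_2 B^s$. The assumption $s \leq w - 2$ ensures $s + 2 \leq w$, so the shift computation gives $e_2 B^s = e_{s+2}$, a row whose first entry is zero. Right-multiplication by $C$ amounts to reading off that first entry and outputting it times $e_2$, which yields $0$.

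The main obstacle is simply index bookkeeping: one must verify that the bound $s \leq w - 2$ is precisely what prevents iterated shifts of $e_2$ under $B$ from triggering the wrap-around $e_w B$, after which the vanishing of the first-column entries would fail. Once this boundary check is recorded, both parts reduce to elementary computations with essentially permutation matrices, and the companion-matrix block identification is automatic.
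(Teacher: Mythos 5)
Your treatment of the two substantive claims is correct and takes essentially the paper's route: the rows of $B^s$ are obtained by iterating the shift $e_i\mapsto e_{i+1}$ (with $e_1$ annihilated and $e_w$ wrapping to $(a_{w-1},\dots,a_0)$), and $CXC$ has at most one nonzero entry, sitting at position $(1,2)$ and equal to the $(2,1)$-entry of $X$, which for $X=B^s$ is the first coordinate of $e_2B^s=e_{s+2}$ and hence vanishes exactly when $s\le w-2$. Your explicit remark that the pivotal entry is $a_{w-1}$ and that one needs $a_{w-1}=1$ (true for MT19937's parameter, and invoked by the paper only later, in Lemma~\ref{lem:rank-w}) is a point the paper glosses over here, and is worth keeping.

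There is, however, one false step in your justification of the lower-right block. With your own conventions (row vectors acted on from the right), $\operatorname{span}(e_2,\dots,e_w)$ is \emph{not} $B$-invariant: $e_wB=(a_{w-1},\dots,a_0)$ has $a_{w-1}=1$ in its first coordinate, so it leaves that span. Indeed, were the span invariant, the first column of $B^s$ below the first row would be identically zero, contradicting the $1$ at position $(w-s+1,1)$ that you computed two sentences earlier. The conclusion that the lower-right block of $B^s$ equals $F^s$ is still true, but the correct reason is the block lower-triangular shape with vanishing corner: writing
$B=\left(\begin{smallmatrix}0&0\\ v&F\end{smallmatrix}\right)$
gives
$B^s=\left(\begin{smallmatrix}0&0\\ F^{s-1}v&F^s\end{smallmatrix}\right)$;
equivalently, the invariant object is the line $\langle e_1\rangle$ (which $B$ kills), and $B$ induces $F$ on the quotient $W/\langle e_1\rangle$, so $B^s$ induces $F^s$ there. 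Since nothing downstream uses the $F^s$ identification (Corollary~\ref{cor:BCB} only needs the first column of $B^s$ and the shift behaviour of rows), this is a local repair rather than a structural defect, but as written the invariance claim should not stand.
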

\begin{proof}
The first statement follows from the induction on $s$ and the property
of a companion matrix $F$. For the last statement, $CXC$
has only one possibly non-zero component, whose value is
the $(2,1)$-component of $X$, which is zero for $B^s$ with $s\leq w-2$.
\end{proof}
\begin{corollary}\label{cor:BCB}
For $0\leq t\leq w-2$, $CB^t$ is a matrix whose components are
all zero, except the one at the $(1,t+2)$-component.
For $1\leq s \leq w-2$, $B^sCB^t$
is a matrix whose columns are zero,
except the $(t+2)$-nd column, which
has $w-s$ zeroes at the top, and the $(w-s+1)$-st
component is one.
\end{corollary}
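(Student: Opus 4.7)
The plan is to exploit that $C$ is a rank-one matrix. Since $C$ has a single nonzero entry $C_{1,2}=1$, for any matrices $X, Y$ we have
\begin{equation*}
(X C Y)_{i,j} = X_{i,1} \, Y_{2,j}.
\end{equation*}
In particular $(B^s C B^t)_{i,j} = (B^s)_{i,1} \cdot (B^t)_{2,j}$, so both assertions reduce to computing the first column of $B^s$ and the second row of $B^t$ independently.

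First I would establish that the second row of $B^t$ is $e_{t+2}^T$ (the standard basis row with $1$ in position $t+2$) for $0 \le t \le w-2$, by induction on $t$. The base case $t=0$ is immediate since $B^0=I$. For the induction step I use the explicit shape of $B$: its rows $2$ through $w-1$ are precisely $e_3^T, e_4^T, \ldots, e_w^T$, so $e_k^T B = e_{k+1}^T$ for every $2 \le k \le w-1$. The constraint $t \le w-2$ ensures that this shift is never applied starting from the last row $e_w^T$, which in $B$ is replaced by the row $(a_{w-1}, \ldots, a_0)$ and would break the pattern. Specialising the factorisation above to $s=0$ yields the first assertion: $CB^t$ has its only nonzero entry at $(1, t+2)$.

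For the second part, with $1 \le s \le w-2$, combine the identity $(B^s C B^t)_{i,j} = (B^s)_{i,1} (B^t)_{2,j}$ with the previous step: since $(B^t)_{2,j}$ vanishes unless $j = t+2$ (in which case it equals $1$), every column of $B^s C B^t$ other than the $(t+2)$-nd is zero, and the $(t+2)$-nd column coincides with the first column of $B^s$. Lemma~\ref{lem:B-power} describes this column precisely in the required range: its top $w-s$ entries vanish and its $(w-s+1)$-st entry equals $1$. This is exactly the claimed shape.

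There is no substantial obstacle beyond bookkeeping. The only points of care are honouring the constraint $t \le w-2$, so that the shift identity $e_k^T B = e_{k+1}^T$ never invokes the bottom row of $B$, and honouring $s \le w-2$, so that Lemma~\ref{lem:B-power} applies and pins down the single $1$ in the first column of $B^s$.
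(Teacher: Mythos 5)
Your proof is correct and follows essentially the same route as the paper: the paper likewise reduces the claim to the shift identity $e_2^T B^t = e_{t+2}^T$ (valid because the shift never reaches the bottom row when $t\le w-2$) together with the description of the first column of $B^s$ from Lemma~\ref{lem:B-power}. Your explicit rank-one packaging $(XCY)_{i,j}=X_{i,1}Y_{2,j}$ is a slightly cleaner way to organize the same computation, not a different argument.
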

\begin{proof}
For a horizontal vector $\by=(y_1,\ldots,y_w)$ with $y_w=0$,
$$
\by B =(0,0,y_2,y_3,\ldots,y_{w-1}),
$$
namely,
obtained from $\by$ by replacing
$y_1$ by $0$ and shift right. Because each row of $C$
has $w-2$ zeroes at the right,
$CB^t$ is obtained by this shifting for $t\leq w-2$,
which proves the first statement.
By Lemma~\ref{lem:B-power}, we know the first column of
$B^s$, and $B^sC$ has a unique nonzero column at the second row,
which is identical with the first column of $B^s$.
The form of $B^sCB^t$ follows.
\end{proof}

\begin{lemma}\label{lem:Q}
For $k\leq w-2$, we have
\begin{equation}\label{eq:BC-power}
(B+D^{-1}C)^{k}
=
B^k + D^{-1}\sum_{i=0}^{k-1}B^iCB^{k-i-1}.
\end{equation}
We define
$$
Q_{k}:=\sum_{i=0}^{k-1}B^iCB^{k-i-1}.
$$
Then, for $2^s \leq w-2$,
\begin{eqnarray}\label{eq:BC2-power}
Q_{2^s}&=&
Q_{2^{s-1}}B^{2^{s-1}}
+
B^{2^{s-1}}Q_{2^{s-1}}
\end{eqnarray}
holds.
\end{lemma}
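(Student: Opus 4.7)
The plan is to prove both assertions by direct algebraic expansion, using the fact that $D^{-1}$ (being a scalar in the sense that it commutes with all matrices) can be pulled out freely, together with the vanishing identity $CB^jC = 0$ for $j \leq w-2$ supplied by Lemma~\ref{lem:B-power}.

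For \eqref{eq:BC-power}, I would expand $(B + D^{-1}C)^k$ as a sum over the $2^k$ ordered words of length $k$ in the letters $B$ and $D^{-1}C$, grouping the result according to the number $r$ of occurrences of $D^{-1}C$. The $r=0$ contribution is exactly $B^k$, and the $r=1$ contribution, after collecting the single $D^{-1}$ out front, is precisely $D^{-1}\sum_{i=0}^{k-1}B^iCB^{k-i-1} = D^{-1}Q_k$. The remaining step is to show that every $r \geq 2$ contribution vanishes: in any such monomial the two leftmost $C$'s are separated by $B^j$ for some $j$ with $0 \leq j \leq k-2 \leq w-4$, and Lemma~\ref{lem:B-power} gives $CB^jC=0$ in that range, so the whole monomial is zero.

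For \eqref{eq:BC2-power}, I would apply \eqref{eq:BC-power} with $k=2^{s-1}$ (legal since $2^{s-1} \leq 2^s \leq w-2$) to write $(B+D^{-1}C)^{2^{s-1}} = B^{2^{s-1}} + D^{-1}Q_{2^{s-1}}$, then square. Expanding the square and using that $D^{-1}$ commutes with everything but being careful that $B$ and $Q_{2^{s-1}}$ do \emph{not} commute, I obtain
\begin{equation*}
(B+D^{-1}C)^{2^s}
= B^{2^s} + D^{-1}\bigl(B^{2^{s-1}}Q_{2^{s-1}} + Q_{2^{s-1}}B^{2^{s-1}}\bigr) + D^{-2}Q_{2^{s-1}}^{\,2},
\end{equation*}
where the cross-term coefficient collapsed from $2$ to $1+1$ only because we work in characteristic two and the two cross products are distinct matrices. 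Applying \eqref{eq:BC-power} instead with $k=2^s$ gives $(B+D^{-1}C)^{2^s} = B^{2^s} + D^{-1}Q_{2^s}$, and comparing the two expressions yields the desired formula once the residual term $D^{-2}Q_{2^{s-1}}^{\,2}$ is shown to vanish.

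The main work is therefore verifying $Q_{2^{s-1}}^{\,2}=0$. Expanding,
\begin{equation*}
Q_{2^{s-1}}^{\,2}
= \sum_{i,j=0}^{2^{s-1}-1} B^i\, C\, B^{\,2^s - i - j - 2}\, C\, B^{\,2^{s-1}-j-1}.
\end{equation*}
The middle exponent $2^s - i - j - 2$ lies in $[0, 2^s - 2]$ and hence is at most $w-4 \leq w-2$, so each summand contains a factor $CB^{\ell}C$ with $\ell \leq w-2$ and is zero by Lemma~\ref{lem:B-power}. I expect this bookkeeping of index ranges, together with keeping track of which operators commute (only $D$ with matrices, never $B$ with $C$), to be the only real obstacle; once the vanishing identity is in hand, both statements drop out.
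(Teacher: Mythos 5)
Your proof of \eqref{eq:BC-power} is essentially the paper's: expand $(B+D^{-1}C)^k$ into the $2^k$ words in $B$ and $D^{-1}C$, keep the words with at most one $C$, and kill the rest using $CB^jC=0$ from Lemma~\ref{lem:B-power}. For \eqref{eq:BC2-power}, however, you take a genuinely different route. The paper simply splits the defining sum $Q_{2^s}=\sum_{i=0}^{2^s-1}B^iCB^{2^s-i-1}$ according to whether the single $C$ sits in the first or the second block of $2^{s-1}$ letters: the terms with $i\le 2^{s-1}-1$ sum to $Q_{2^{s-1}}B^{2^{s-1}}$ and those with $i\ge 2^{s-1}$ sum to $B^{2^{s-1}}Q_{2^{s-1}}$ --- a purely combinatorial one-liner that never returns to the operator $B+D^{-1}C$. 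You instead square the $k=2^{s-1}$ instance of \eqref{eq:BC-power}, compare with the $k=2^s$ instance, and must then dispose of the residual term $D^{-2}Q_{2^{s-1}}^{\,2}$; this is valid but costs you the extra verification $Q_{2^{s-1}}^{\,2}=0$ (plus a silent appeal to the fact that one may strip the $D^{-1}$'s when matching coefficients, which holds since $D$ is invertible and the diagonal action of matrices is faithful). That verification is sound in substance, but note the middle exponent you display is miscopied: the product $B^iCB^{2^{s-1}-i-1}\cdot B^jCB^{2^{s-1}-j-1}$ has inner block $B^{2^{s-1}+j-i-1}$, not $B^{2^s-i-j-2}$. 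Fortunately the correct exponent also ranges over $[0,2^s-2]\subseteq[0,w-2]$, so the factor $CB^{\ell}C$ still vanishes by Lemma~\ref{lem:B-power} and your conclusion stands. On balance the paper's splitting is shorter and avoids the $Q^2$ term entirely; your version has the mild advantage of deriving the doubling identity directly from the operator identity it is meant to serve, at the price of the extra vanishing lemma.
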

\begin{proof}
The left hand side of (\ref{eq:BC-power})
is the sum of all the possible $2^k$
monomials consisting of $k$ of $B$ or $D^{-1}C$.
By Lemma~\ref{lem:B-power},
the terms with two $C$'s are zero. Thus (\ref{eq:BC-power})
follows. By the case division of the place of $C$,
(\ref{eq:BC2-power}) follows.
\end{proof}

\begin{lemma}\label{lem:rank-w}
For any $s\geq 0$, the rank of
$
\begin{pmatrix}
B^s \\
Q_s
\end{pmatrix}
$
is $w$.
\end{lemma}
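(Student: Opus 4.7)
The plan is to convert the rank-$w$ statement into an invertibility statement for the $w \times w$ matrix $B + D^{-1}C$ viewed over the Laurent polynomial ring $\F_2[D, D^{-1}]$. By Lemma~\ref{lem:Q}, for $s \le w-2$ we have the clean identity
\[
(B + D^{-1}C)^s = B^s + D^{-1}Q_s,
\]
so if $B + D^{-1}C$ is invertible as a matrix over $\F_2[D, D^{-1}]$, then so is its $s$-th power. This invertibility would preclude any nonzero constant vector $v \in \F_2^w \subset (\F_2[D, D^{-1}])^w$ from lying in the joint kernel of $B^s$ and $Q_s$, which is exactly the full-column-rank claim.

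First I would verify that $B + D^{-1}C$ is invertible by computing its determinant. Since $D^{-1}$ enters only at the $(1,2)$-entry, and the first column contains the single nonzero entry $a_{w-1}=1$ at the bottom (the assumption tacitly used already in Lemma~\ref{lem:B-power}), cofactor expansion along the first column reduces everything to a $(w-1) \times (w-1)$ diagonal minor with diagonal $(D^{-1}, 1, \ldots, 1)$. This gives $\det(B + D^{-1}C) = D^{-1}$, a unit in $\F_2[D, D^{-1}]$, so $B + D^{-1}C$ and all of its powers are invertible over this ring.

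To conclude, I would take any $v \in \F_2^w$ with $B^s v = 0$ and $Q_s v = 0$. Summing these equations yields $(B^s + D^{-1}Q_s)v = 0$ in $(\F_2[D, D^{-1}])^w$, and invertibility of $(B + D^{-1}C)^s = B^s + D^{-1}Q_s$ then forces $v = 0$. This is exactly the injectivity of the column map $v \mapsto (B^s v,\, Q_s v)$ on $\F_2^w$, hence $\begin{pmatrix} B^s \\ Q_s \end{pmatrix}$ has rank $w$. The main step that needs care is the determinant computation itself, but it collapses to a one-line cofactor expansion because of the sparseness of the first column of $B$ and the placement of the $D^{-1}$ entry; beyond that, the argument is just bookkeeping about which graded component in $D$ each contribution lives in.
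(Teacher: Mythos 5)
Your argument is correct and is, at bottom, the same argument as the paper's: both rest on the identity $(B+\epsilon C)^s = B^s + \epsilon Q_s$ for a unit $\epsilon$, together with the invertibility of $B+\epsilon C$, which forces any $v$ annihilated by both $B^s$ and $Q_s$ to vanish. The difference is only in the packaging. The paper sets $\epsilon = 1$ and simply observes that $B+C$ is a companion matrix with $a_{w-1}=1$, hence invertible over $\F_2$; you keep $\epsilon = D^{-1}$, work over the Laurent ring $\F_2[D,D^{-1}]$, and compute $\det(B+D^{-1}C) = a_{w-1}D^{-1}$ by cofactor expansion. Your determinant computation is right, but the Laurent-ring apparatus buys you nothing here: specializing $D\mapsto 1$ collapses your argument to the paper's one-line version, and the graded-component bookkeeping you allude to is never needed, since invertibility alone already gives $v=0$ from $(B^s+D^{-1}Q_s)v=0$. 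One small point, which your write-up shares with the paper's own proof: the identity $(B+\epsilon C)^s = B^s+\epsilon Q_s$ requires every monomial containing two $C$'s to vanish, i.e.\ $CB^jC=0$ for all $j\le s-2$, which Lemma~\ref{lem:B-power} guarantees only when $s-2\le w-2$. So your proof (via Lemma~\ref{lem:Q}, hence for $s\le w-2$) does not literally establish the lemma ``for any $s\ge 0$'' as stated; this is harmless because Theorem~\ref{th:main} only invokes the lemma for $s=2^k\le w-2$, but the restriction is worth making explicit.
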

\begin{proof}
We remark that $B+C$ is a companion matrix, and since
$a_{w-1}=1$, it is invertible. Thus, $(B+C)^s=B^s+Q_s$
is invertible and has rank $w$. Hence
$
\begin{pmatrix}
B^s \\
Q_s
\end{pmatrix}
$
has rank no less than $w$, and being $2w\times w$ matrix,
it has rank $w$.
\end{proof}

\begin{theorem}\label{th:main}
Assume that the initialization of MT19937 is done
uniformly (including zero).
Let $0\leq s<t$ be integers with $2^t\leq w-2$.
The probability that
\begin{equation}\label{eq:event-k}
\bx_{i+2^k(m-1)}=\bx_{i+2^k(n-1)}
\end{equation}
holds for all $k$, $s\leq k \leq t$,
is
$$
2^{-w}\cdot 2^{-(2^{t}-2^{s})}.
$$
(This is much higher than
$2^{-w(t-s+1)}$ for a true random sequence,
since $2^t\leq w-2$.)
\end{theorem}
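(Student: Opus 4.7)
The plan is to reduce the joint event to a linear condition on the pair $(\bx_i, \bx_{i-1}) \in W^2$ and then compute the rank of that condition. By Corollary~\ref{cor:repetition} and Lemma~\ref{lem:Q}, the event~(\ref{eq:event-k}) for a single $k$ is equivalent to $\bx_i B^{2^k} + \bx_{i-1} Q_{2^k} = 0$, so the joint event becomes a linear system of at most $w(t-s+1)$ scalar equations on $(\bx_i, \bx_{i-1})$. A uniform initialization of MT19937 makes $(\bx_i, \bx_{i-1})$ uniform on $W^2$ (two consecutive state words can be freely prescribed), so the probability equals $2^{-r}$, where $r$ is the rank of the system on $W^2$.

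To compute $r$, I would introduce $\bz := \bx_i - \bx_{i-1}$ and $R := B + C$, which is invertible by the argument in Lemma~\ref{lem:rank-w}. The $k$-th equation then reads $\bz B^{2^k} + \bx_{i-1} R^{2^k} = 0$, since $B^{2^k} + Q_{2^k} = R^{2^k}$. The $k = s$ equation alone has full rank $w$ and uniquely determines $\bx_{i-1} = \bz B^{2^s} R^{-2^s}$, contributing rank $w$. Substituting back into the $k > s$ equations and using $R^{2^k - 2^s} = B^{2^k - 2^s} + Q_{2^k - 2^s}$ (a matrix variant of Lemma~\ref{lem:Q}, valid since $2^k - 2^s \le w - 2$), the $k$-th equation collapses to $\bz \cdot B^{2^s} Q_{2^k - 2^s} = 0$.

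It remains to show that the map $\bz \mapsto (\bz B^{2^s} Q_{2^k - 2^s})_{k = s+1}^t$ has rank exactly $2^t - 2^s$. Corollary~\ref{cor:BCB}, together with the rank-one structure of each $B^i C B^{j-i-1}$, gives column $c$ of $Q_j$, for $2 \le c \le j+1$, equal to the first column of $B^{j-c+1}$; left-multiplying by $B^{2^s}$ yields the first column of $B^{2^s + j - c + 1}$. Setting $j = 2^k - 2^s$ and unioning the nonzero columns over $k = s+1, \ldots, t$, the combined matrix's columns are exactly $\{\text{first column of } B^p : p = 2^s, 2^s + 1, \ldots, 2^t - 1\}$, a list of $2^t - 2^s$ vectors. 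By Lemma~\ref{lem:B-power}, the first column of $B^p$ has its leading nonzero entry (a $1$) at row $w - p + 1$, with zeros strictly above; for $p$ ranging over the above interval, these leading-one rows are the $2^t - 2^s$ distinct indices $w - 2^t + 2, \ldots, w - 2^s + 1$, all within $\{1, \ldots, w\}$ thanks to the hypothesis $2^t \le w - 2$. A row-echelon argument then gives linear independence, so the rank is $2^t - 2^s$ and hence $r = w + (2^t - 2^s)$. The main obstacle is the column identification for $B^{2^s} Q_{2^k - 2^s}$; once Corollary~\ref{cor:BCB} is leveraged to expose the ``first column of $B^p$'' structure, Lemma~\ref{lem:B-power}'s leading-row statement makes the linear independence immediate.
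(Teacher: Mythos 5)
Your proposal is correct, and while it shares the paper's overall strategy---reduce the joint event via Corollary~\ref{cor:repetition} and Lemma~\ref{lem:Q} to the linear system $\bx_i B^{2^k}+\bx_{i-1}Q_{2^k}=0$ on a uniform pair $(\bx_i,\bx_{i-1})$ and compute its rank---your rank computation takes a genuinely different route. The paper assembles all the conditions into one $2w\times w(t-s+1)$ block matrix and performs block-column operations driven by the doubling recursion (\ref{eq:BC2-power}), arriving at residual blocks $B^{2^k}Q_{2^k}$ for $k=s,\ldots,t-1$, whose nonzero columns are the first columns of $B^p$ for $p\in[2^k,2^{k+1}-1]$; these intervals tile $[2^s,2^t-1]$ disjointly, so the $2^t-2^s$ extra columns appear each exactly once. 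You instead change variables to $(\bz,\bx_{i-1})$ with $\bz=\bx_i+\bx_{i-1}$, use invertibility of the companion matrix $R=B+C$ to eliminate $\bx_{i-1}$ from the $k=s$ equation (this replaces the paper's appeal to Lemma~\ref{lem:rank-w} for the first block), and reduce the remaining equations to $\bz B^{2^s}Q_{2^k-2^s}=0$ via $R^{2^k}=R^{2^s}R^{2^k-2^s}$ and the matrix form $(B+C)^j=B^j+Q_j$; this avoids (\ref{eq:BC2-power}) entirely. Your residual blocks yield the first columns of $B^p$ for the \emph{nested} intervals $p\in[2^s,2^k-1]$, so the nonzero columns repeat across blocks and your phrase ``the combined matrix's columns are exactly'' that set should read ``the set of distinct nonzero columns is''; the repetition is harmless since only the column span matters, and the distinct columns have distinct leading rows $w-p+1$ for $p=2^s,\ldots,2^t-1$ (all exponents at most $w-3$, so Lemma~\ref{lem:B-power} and Corollary~\ref{cor:BCB} apply), giving rank $2^t-2^s$ and total rank $w+2^t-2^s$ as required. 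Both arguments hinge on the same final combinatorial fact (distinct leading-one rows of first columns of powers of $B$); yours trades the paper's column-operation bookkeeping for a cleaner elimination, at the small cost of needing the slightly more general identity $(B+C)^j=B^j+Q_j$ for arbitrary $j\leq w-2$ rather than only for powers of two.
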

For example, we choose $s=0$ and $t=1$.
Then, the probability that
$$
\bx_{i+(m-1)}=\bx_{i+(n-1)} \mbox{ and }
\bx_{i+2(m-1)}=\bx_{i+2(n-1)}
$$
occur is $1/2\cdot 2^{-w}$, while
for true random numbers, the probability is $2^{-2w}$.
\begin{proof}
By Corollary~\ref{cor:repetition},
(\ref{eq:event-k}) is equivalent to
$$\ev_i(\calX(B+D^{-1}C)^{2^k})=0,$$
and by Lemma~\ref{lem:Q} equivalent to
$$
\ev_i(\calX(B^{2^k}+D^{-1}Q_{2^k}))=0,
$$
which is
\begin{equation}\label{eq:on-2k}
\bx_{i}B^{2^k}+\bx_{i-1}Q_{2^k}=0.
\end{equation}
The conditions for all $k$
satisfying $s\leq k \leq t$ can be described by the
product of a vector and a matrix
\begin{equation}\label{eq:rank}
(\bx_i, \bx_{i-1})
\begin{pmatrix}
B^{2^s} & B^{2^{s+1}} &\cdots & B^{2^{t-1}} &B^{2^t} \\
Q_{2^s} & Q_{2^{s+1}} &\cdots & Q_{2^{t-1}} &Q_{2^t}
\end{pmatrix} = 0.
\end{equation}
Because of the random choice of the initial seed,
$(\bx_i, \bx_{i-1})$ is uniformly random
(note that this property is called the 2-dimensional
equidistribution, while MT19937 is known to be 623-dimensionally
equidistributed), and
the probability that this equality holds is
$2^{-r}$, where $r$ is the rank of the matrix in (\ref{eq:rank}).

By multiplying $B^{2^{t-1}}$ from the right to the
second (from the right end) row and subtracting it from the right
most row, we have a matrix with the same rank
$$
\begin{pmatrix}
B^{2^s} & B^{2^{s+1}} &\cdots & B^{2^{t-1}} &0 \\
Q_{2^s} & Q_{2^{s+1}} &\cdots & Q_{2^{t-1}} & B^{2^{t-1}}Q_{2^{t-1}}
\end{pmatrix},
$$
where we use (\ref{eq:BC2-power}) for the right-bottom corner.
Then, we multiply $B^{2^{t-2}}$ to the third row from the right,
and subtract it from the second row from the right. By iteration, we have
a matrix with the same rank
\begin{equation}\label{eq:want-rank}
\begin{pmatrix}
B^{2^s} & 0&\cdots & 0&0 \\
Q_{2^s} & B^{2^s}Q_{2^s} &\cdots & B^{2^{t-2}}Q_{2^{t-2}} & B^{2^{t-1}}Q_{2^{t-1}}
\end{pmatrix}.
\end{equation}
Here we have
$$
B^{2^k}Q_{2^k}
=B^{2^k}\sum_{i=0}^{2^k-1}B^iCB^{2^k-i-1}
=\sum_{i=0}^{2^k-1}B^{2^k+i}CB^{2^k-i-1}.
$$
By Corollary~\ref{cor:BCB},
$B^{2^k+i}CB^{2^k-i-1}$ has a unique nonzero column
as the $(2^k-i-1+2)$-nd column
with top $w-(2^k+i)$ components being zeroes
and the $(w-(2^k+i)+1)$-st component is one.
The range of $i$ is $0\leq i \leq 2^{k}-1$.
This means that all these columns for $k$,
$s\leq k\leq t$, and $0\leq i \leq 2^{k}-1$
are linearly independent.
Their number is
$$
2^s+2^{s+1}+\cdots+2^{t-1}=2^t-2^s.
$$
This means that we have a
$2w\times (w+2^t-2^s)$-matrix with the same rank as (\ref{eq:want-rank})
\begin{equation}\label{eq:rank-final}
\begin{pmatrix}
B^{2^s} & 0\\
Q_{2^s} & G
\end{pmatrix},
\end{equation}
where $G$ consists of the above $2^t-2^s$ columns.
We show that the columns of (\ref{eq:rank-final})
are independent. Let $b_1,\ldots,b_{w},c_{1},\ldots,c_{2^t-2^s}$
be elements of $\F_2$, and the linear combination
of the columns with these coefficients is zero.
Then, since $G$ has independent columns, it
follows that $c_i$ for $i=1,\ldots,2^t-2^s$ are zeroes.
Then, by Lemma~\ref{lem:rank-w}, $b_i$ for $i=1,\ldots,w$ are zeroes,
hence the columns of (\ref{eq:rank-final}) are linearly independent.
Thus the matrix (\ref{eq:rank-final}) has
rank $w+2^t-2^s$.
Thus, the probability that all equalities in
(\ref{eq:on-2k}) for $s\leq k \leq t$ hold is
$2^{-w-(2^t-2^s)}$, which proves the theorem.
\end{proof}

\begin{example}\label{ex:prob}
Put $n=624$ and $m=397$.
Let $\by_0, \by_1, \by_2, \ldots$ be the 32-bit integer
outputs of MT19937.
Let $i$ be an arbitrary integer.
Then, the following are strongly positively correlated.
\begin{enumerate}[(1)]
\item $\by_{i+(m-1)}=\by_{i+(n-1)}$.
\item $\by_{i+2(m-1)}=\by_{i+2(n-1)}$.
\item $\by_{i+4(m-1)}=\by_{i+4(n-1)}$.
\item $\by_{i+8(m-1)}=\by_{i+8(n-1)}$.
\item $\by_{i+16(m-1)}=\by_{i+16(n-1)}$.
\item $\by_{i+32(m-1)}=\by_{i+32(n-1)}$.
\end{enumerate}
For example, the probability that (1) and (2)
occur is the case $s=0$ and $t=1$, hence
$2^{-w}2^{-(2^t-2^s)}=1/2\cdot 2^{-w}$, while
the truly random sequence has the probability
$2^{-w}\cdot 2^{-w}$.

The probability that (4) and (5) occur
is the case $s=3$ and $t=4$, hence
$2^{-w}2^{-(2^4-2^3)}=1/256\cdot 2^{-w}$.

The probability that (5) and (6) occur is
$2^{-w}2^{-(2^5-2^4)}=1/65536\cdot 2^{-w}$.

For triples, for example,
the probability that (2), (3) and (4) occur
is the case $s=1$ and $t=3$, hence is $2^{-w}2^{-(8-2)}=1/64\cdot 2^{-w}$,
while the probability for a truly random sequence is
$2^{-3w}$.
\end{example}

\section{Modified repetition tests found the patterns}
\label{sec:REPETITION}
The original repetition test
counts the run-length for
observing the identical
32-bit occurring at two distinct places.
Namely, starting from the
output $\by_0$ of MT19937,
we memorize consecutive outputs,
until we found $r$ such that $\by_{r-d}=\by_{r}$.
This $r$ is called the run-length
for the repetition.

After finding such an $r$, we do not
initialize MT19937, just continue
to find the next repetition.
In the original test, one set is to iterate this 100 times,
and the average of $r$'s is taken and compared with
its theoretical distribution.
(Three sets are repeated.)

The first author iterated this 100 billion times,
and instead of taking the average,
he observed the number of occurrences
of $r$, for $r=2,3,\ldots, 2100000$.
The probability that $r>2100000$
occurs is negligibly small
($0.000\cdots$ with 222 zeros).
Then, the number of the occurrences of the case 
$r=623$ is extremely high
(more than 40 times larger than the expectation, as stated above).
Analogous phenomena are observed at
$r=1246, 2492, \ldots$.

Then, the second and the third authors
analyzed the phenomena, and
found Theorem~\ref{th:main},
which explains these phenomena.
Probability that
$r=623$
occurs is very small for
a true random 32-bit integer sequence.
For the case of MT19937,
after observing
the first repetition, the continuous
search for the next repetition
is affected by the
strong correlation
among (1) and (2) in Theorem~\ref{th:main}.
Namely, it is often the case that
(2) is preceded by (1), and (1) is
detected in the previous run.
A part of Theorem
implies that (2) is very often after (1)
(i.e., with probability 1/2, whereas it should be $2^{-32}$
if the sequence is truly random),
which means that we should observe the run-length
$n-1=623$ quite often
(after finding (1), we continue to search for the next,
which implies that we observe (2) very often,
where the run-length is $n-1=623$).

The first author noticed that the cases
(2) after (1), (3) after (2), and
(3) after (2) after (1), etc.,
are extremely frequent.
The main theorem is proved to
explain these phenomena.
The exact figures of the cases encountered are available 
for download at: 
\cite{SICAP-DOWNLOADS}.



\bibliographystyle{elsarticle-num}
\bibliography{sfmt-kanren}





\end{document}